\newfont{\mycrnotice}{ptmr8t at 7pt}
\newfont{\myconfname}{ptmri8t at 7pt}
\newtheorem{theorem}{Theorem}
\newtheorem{lemma}[theorem]{Lemma}
\begin{document}

\title{Brief Announcement : Average Complexity for the LOCAL Model}

\author{Laurent Feuilloley\thanks{\small \sl Part of this work was done while the author was visiting LIAFA at university Paris Diderot, with additional support from ANR project DISPLEXITY.}\\
{\small\'Ecole Normale Sup\'erieure de Cachan,}\\  
{\small France}\\
}

\maketitle

\abstract{A standard model in network synchronised distributed computing is the LOCAL model \cite{Peleg00}. In this model, the processors work in rounds and, in the classic setting, they know the number of vertices of the network, $n$. Using $n$, they can compute the number of rounds after which they must all stop and output. It has been shown recently that for many problems, one can basically remove the assumption about the knowledge of $n$, without increasing the asymptotic running time \cite{KormanSV13}\cite{Musto11}. In this case, it is assumed that different vertices can choose their final output at different rounds, but continue to transmit messages. In both models, the measure of the running time is the number of rounds before the last node outputs. In this brief announcement, the vertices do not have the knowledge of $n$, and we consider an alternative measure: the average, over the nodes, of the number of rounds before they output.   We prove that the complexity of a problem can be exponentially smaller with the new measure, but that Linial's lower bound for colouring \cite{Linial92} still holds.}

\section{Introduction}
In the LOCAL model \cite{Peleg00}, the processors are located at the nodes of the network, have distinct identifiers, and work in rounds. At each round, each processor sends messages to its direct neighbours, receives messages from them, and computes its new state. 
In some variants of this model, e.g., when $n$ is not known \cite{KormanSV13,Musto11}, every node can choose its output at an arbitrary round, yet it must continue to transmit the messages it receives. The classic measure of the running time is the number of rounds before all the nodes have output. We consider an alternative measure which is the average, over the nodes, of the running time before they output. 

An equivalent way to describe the LOCAL model is to consider that every node gathers all the information in a ball around itself and output a function of this ball. For example a node can increment the radius of the ball it sees, until it has enough information to output. This second vision of the LOCAL model is more convenient for this paper. Therefore, in the following, we mostly consider radiuses, and not rounds. When the algorithm, the graph, and the identifiers are set, the ``radius of the node $v$'' refers to the radius at which the algorithm chooses to output, and we denote it by $r(v)$. From this point of view, the classic measure of the running time is the maximum of the radiuses of these balls, and the alternative measure is the average of these radiuses. More precisely for a given size $n$, the running time is usually: 
$$\max_{G:|G|=n} \left( \max_{v\in G}r(v)\right),$$ 
and, in this paper, we consider:  
$$\max_{G:|G|=n}\left( \sum_{v\in G}r(v)/n \right).$$
 
In this brief announcement, the computation is always deterministic, and the average is always over the nodes. In particular, remark that we consider the worst case for the distribution of the identifiers. 

On one hand, we show that there is a natural problem for which there exists an algorithm with average running time exponentially smaller than the worst case complexity. In this problem, called \emph{largest ID}, every vertex of a cycle must decide if it has the largest ID or not, which is a classic way to elect a leader. On the other hand, we show that Linial's lower bound on colouring \cite{Linial92} holds for the average node measure, i.e. the vertices need an average radius of $\Omega(\log^*n)$ to compute a valid 3-colouring. Note that this lower bound matches the upper bound as it is possible to 3-colour the $n$-node ring in $O(\log^*n)$ rounds even without the knowledge of $n$ \cite{KormanSV13}\cite{Musto11}.

The goal of this work is to continue the study of locality by proposing a new measure of the running time. This measure is suited to algorithms in which some vertices may stop very early. An example of application is in the context of dynamic networks. The average time to update the labels of the graph after a change at a random node, can be estimated using the average measure. Also in the context of parallel computations that simulate distributed computations, we can take advantage of the fact that a job is finished earlier to process an other job, and then the average running time is the relevant measure. 

\section{Algorithm for the largest ID in a cycle}

In this section, we show that the \emph{largest ID} problem on a cycle has a linear worst case complexity, and that there exists an algorithm with logarithmic average radius. In this problem, each vertex must output \emph{Yes} if it has the largest identifier in the graph, and \emph{No} otherwise. 
The worst case complexity of the problem is linear because, for any algorithm, the vertex with the maximum ID needs to see all the cycle. The following straightforward algorithm gives a linear upper bound: each node increases its radius until it discovers an ID that is larger than its own ID, or until it has seen all the cycle. We show that the average running time of this algorithm is logarithmic in $n$.

The vertex with the maximum identifier needs $n/2$ rounds to discover that it has the largest ID. We can consider that the other vertices are in a path, and that reaching an endpoint is sufficient to stop and output \emph{No}. 
The radius needed by these other vertices is the minimum distance to a vertex with larger ID or an endpoint. Note that the vertex with largest ID of the segment must reach an endpoint. We can then subdivide the vertices of the segment into three parts: the one with the largest ID, the ones in the segment on the left, and the ones on the right. Let $a(p)$ be the maximum (over the permutations of the identifiers) sum of radiuses in a segment with $p$ vertices. Then the following recurrence relation follows from the decomposition into three pieces, and from the symmetry:
$$a(p) = \max_{1 \leq k \leq \lceil p/2\rceil}\left\{ k + a(k-1) + a(p-k) \right\}.$$ 

This sequence $a(n)$, with initial values $a(0)=0$, and $a(1)=1$, is known to be in $\theta(n\ln(n))$ (see for example the sequence A000788 of the OEIS \cite{oeisA000788}). Thereafter, the average radius is logarithmic in $n$, which is exponentially smaller than the worst case complexity.

\section{Lower bound for colouring}
The classic algorithm for distributed 3-colouring a ring is the Cole-Vishkin algorithm \cite{ColeV86}, that uses $O(\log^*n)$ rounds for every vertex. This is basically optimal for the classic measure, as Linial proved that this task requires $\Omega(\log^*n)$ rounds\cite{Linial92}. One could try to improve the average complexity, using less rounds for some vertices. In this section, we show that this is useless: Linial's lower bound also holds for this new measure.

\begin{theorem}\label{thm:col}
The average complexity of 3-colouring a ring with $n$ nodes is $\Omega(\log^*n)$.
\end{theorem}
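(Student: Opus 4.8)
The plan is to reduce the average-complexity lower bound to the classical worst-case lower bound of Linial. Suppose, for contradiction, that there is an algorithm $\mathcal{A}$ that 3-colours the $n$-node ring with average radius $o(\log^*n)$. The key observation is that a small average radius forces a large set of vertices to have small radius: by a Markov-type counting argument, if the average of $r(v)$ over the $n$ nodes is at most $f(n) = o(\log^*n)$, then for any constant $c$ the number of vertices with $r(v) > c\, f(n)$ is at most $n/c$, so at least a $(1-1/c)$-fraction of the vertices output after seeing a ball of radius $O(f(n)) = o(\log^* n)$.

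The heart of the argument is to extract, from this large set of ``fast'' vertices, a long contiguous arc on which every vertex is fast, and then to run Linial's lower bound on that arc. Concretely, I would fix the adversarial assignment of identifiers as follows: take the ring on $n$ nodes, and consider the behaviour of $\mathcal{A}$ on all possible identifier assignments, or more simply work on a large ring and use an averaging/pigeonhole argument over which rotation of a fixed pattern of IDs is placed where. Since a $(1-1/c)$-fraction of nodes is fast, some window of $\Theta(n)$ consecutive vertices contains an arc of length, say, $n^{1/2}$ (or indeed $\Theta(n)$) of consecutive fast vertices; on that arc the colours produced by $\mathcal{A}$ depend only on balls of radius $o(\log^* n) = o(\log^* m)$ where $m$ is the arc length. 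But the restriction of a proper 3-colouring of the ring to an arc is a proper 3-colouring of a path, and Linial's argument shows that properly 3-colouring a path (equivalently, the ring, up to constant factors) requires radius $\Omega(\log^* m)$ even in the worst case; choosing the IDs along the arc to be Linial's hard instance yields a contradiction.

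The one subtlety to handle carefully is the interface between ``average over nodes for each fixed graph'' and ``worst case over identifier assignments'': the theorem's measure is $\max_{G:|G|=n}\big(\sum_{v\in G} r(v)/n\big)$, so I get to choose the worst identifier assignment, and I must make sure the hard IDs of Linial's construction can be embedded on a long fast arc simultaneously with the average staying small elsewhere — this is immediate because outside the arc I may use the trivial identifiers and the trivial colouring with whatever radius, and the average is dominated by the fast majority. The main obstacle, then, is the combinatorial step of guaranteeing a sufficiently long arc of consecutive fast vertices (rather than merely many fast vertices scattered around the ring): this requires either iterating the Markov bound at several scales or, more cleanly, observing that we are free to pick the graph after seeing which vertices the algorithm treats as fast, so we can rotate/relabel to align a long fast arc with the hard instance. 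Once that arc is secured, plugging in Linial's $\Omega(\log^* m)$ bound with $m = \Theta(n)$ finishes the proof.
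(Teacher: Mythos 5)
There is a genuine gap, and it is at the heart of the matter rather than in the details. Your plan is to find a long arc of ``fast'' vertices and then embed Linial's hard identifier assignment on that arc to force a contradiction. But which vertices are fast is itself a function of the identifier assignment: $r(v)$ depends on the IDs in $v$'s ball. Once you replace the IDs on the arc by Linial's hard instance you are looking at a \emph{new} instance, and for that instance Linial only guarantees that \emph{one} vertex of the arc has radius $\Omega(\log^* n)$. A single vertex of radius $\Theta(\log^* n)$ contributes $O(\log^* n/n)=o(1)$ to the average, which is perfectly consistent with the average being $o(\log^* n)$ --- indeed your own Markov bound explicitly allows up to $n/c$ slow vertices. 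So no contradiction is reached. (The secondary obstacle you flag --- extracting a long \emph{contiguous} fast arc --- is also real, and ``rotating/relabelling'' does not repair it for the same reason: relabelling changes which vertices are fast. But even granting such an arc, the argument does not close.)

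What is missing is an amplification step showing that a slow vertex cannot be isolated. This is exactly what the paper supplies. It restricts attention to \emph{minimal} algorithms (no other algorithm is pointwise at least as fast and strictly faster on some neighbourhood), proves by a surgery argument on 4-colourings that for such algorithms the radiuses of the vertices between $x$ and $y$ are bounded by $\max\{r(x),r(y)\}+k$ (Lemma 2), and deduces that if $r(v)=r$ then the vertices within distance $r/2$ of $v$ have average radius $\Omega(r)$, i.e.\ a slow vertex drags $\Theta(r)$ neighbours down with it and contributes $\Omega(r^2)$ to the sum of radiuses (Lemma 3). The proof then greedily packs $\Theta(n/\log^* n)$ disjoint such balls --- each obtained from a fresh application of Linial's worst-case bound to the $\geq n/2$ identifiers not yet used --- into a single permutation, giving a total of $\Omega(n\log^* n)$ and hence an average of $\Omega(\log^* n)$. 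Without some version of this neighbourhood-regularity lemma, Linial's bound alone produces only one expensive vertex per instance, which is invisible to the average; your reduction would need to be augmented with an analogous idea to go through.
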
 

We say that an algorithm $A$ for 3-colouring is minimal, if there is no other algorithm $A'$ that behaves strictly better. More precisely, $A$ is minimal if there does not exist $A'$ that uses at most the same radius as $A$ for every neighbourhood, and a strictly smaller radius for at least one neighbourhood. Without loss of generality we prove the theorem only for minimal algorithms. We first prove two lemmas about the regularity of the distribution of the radiuses. For technical reasons, we begin with 4-colouring and then come back to 3-colouring.

\vspace{0.5cm}

\begin{lemma}\label{lem:threshold}
In a graph $G$ with identifiers, given two arbitrary vertices $x$ and $y$, separated by $k$ vertices, if an algorithm $A$ is minimal for 4-colouring, then the radiuses of the vertices between $x$ and $y$ are at most $\max\{r(x),r(y)\}+k$.
\end{lemma}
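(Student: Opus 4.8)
The plan is to argue by contradiction from the minimality of $A$. Fix an instance (a graph with identifiers) in which $x$ and $y$ have $k$ vertices between them, and suppose that some vertex $z$ lying between $x$ and $y$ has $r(z)>\rho^\ast$, where $\rho^\ast:=\max\{r(x),r(y)\}+k$. The one geometric fact I need is a visibility observation: every vertex $v$ lying between $x$ and $y$ is within distance $k$ of each of $x$ and $y$, so the ball $B_{\rho^\ast}(v)$ already contains the whole decision ball $B_{r(x)}(x)$, the whole decision ball $B_{r(y)}(y)$, and the entire path $x\,u_1\cdots u_k\,y$ joining them. Hence, at radius $\rho^\ast$, such a $v$ can already reconstruct the colours that $A$ outputs at $x$ and at $y$ together with all the identifiers on the path between them; note also that, since $k\ge 1$, we have $r(x),r(y)\le\max\{r(x),r(y)\}\le\rho^\ast-1$, so these two decision balls are in fact strictly inside $B_{\rho^\ast}(v)$.

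Using this, I would exhibit an algorithm $A'$ that is nowhere worse than $A$ and is strictly faster at $z$, which contradicts minimality and proves the lemma. The guiding idea: whenever a vertex $v$ would under $A$ use a radius larger than $\rho^\ast$, it is instead permitted to stop at radius $\rho^\ast$ --- where, by the visibility fact, it already knows $A$'s outputs at the two ends of the segment surrounding it and the identifiers in between --- and to output the colour assigned to $v$ by one \emph{canonically chosen} proper $4$-colouring of that segment whose two endpoints carry the colours $A$ gives there. Such a colouring exists by a greedy sweep from one endpoint, and this is exactly the point at which having four colours available rather than three is convenient, since every internal vertex of the path then has at most two forbidden colours. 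Because all the relevant vertices compute the \emph{same} $4$-colouring, their new outputs agree with one another on every internal edge of the segment; because the endpoints of that colouring are precisely $A(x)$ and $A(y)$, which $x$ and $y$ still output, the two edges joining the segment to the rest of the graph are properly coloured as well; and no other edge is touched. Thus on the fixed instance $A'$ is a proper colouring that beats $A$ at $z$ while never being slower anywhere --- the contradiction we want, once $A'$ is known to be a legitimate algorithm.

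The step I expect to be the main obstacle is exactly that last proviso: turning the informal rule above into a precise rule --- of the form ``I see, on each side, an \emph{anchor} whose $A$-decision is contained in my current ball, with a path of recognisable shape between them'' --- and proving that the resulting $A'$ is a correct $4$-colouring algorithm on \emph{every} graph with identifiers, not merely on the fixed instance. The recolourings of different vertices must be mutually consistent and must agree with the unchanged vertices wherever a recoloured stretch abuts them; making this work requires care in choosing the anchors (e.g.\ the nearest vertices on each side whose $A$-decision is visible), in keeping $A$'s colour on every already-visible vertex inside a stretch and filling only the remaining gaps canonically, and in handling a maximal run of consecutive vertices whose $A$-radius exceeds $\rho^\ast$, all of which must be recoloured by one common colouring anchored just outside the run. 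Carrying out the resulting edge-by-edge verification rigorously --- in particular excluding the dangerous configuration of a recoloured vertex adjacent to one that is neither a used anchor nor itself recoloured --- and dealing with degenerate identifier patterns that might overlap unexpectedly, is the technical heart of the argument.
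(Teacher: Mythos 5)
Your setup is the same as the paper's --- argue by contradiction with minimality, force every offending vertex to stop at the threshold $\rho^\ast=\max\{r(x),r(y)\}+k$, and exploit the fact that at that radius such a vertex sees $B_{r(x)}(x)$, $B_{r(y)}(y)$ and all identifiers in between --- but the recolouring rule you propose has a genuine gap, and it is exactly at the point you yourself flag as ``the technical heart.'' The obstruction is that a vertex $v$ stopping at radius $\rho^\ast$ cannot identify the maximal run of over-threshold vertices it belongs to, nor see the decisions of the anchors just outside that run. To decide whether a segment vertex $u$ at distance $d(u,v)\ge 2$ stops before the threshold, $v$ would have to simulate $A$ at $u$ up to radius $\rho^\ast$, which requires $B_{\rho^\ast+d(u,v)}(v)$; and if $u$ stops early with $r(u)=\rho^\ast-1$, its decision ball is visible only within radius $\rho^\ast-1+d(u,v)>\rho^\ast$. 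Concretely, take $k=5$ with internal vertices $u_1,\dots,u_5$, where $u_2,u_3,u_4$ exceed the threshold and $r(u_1)=r(u_5)=\rho^\ast-1$: then $u_2$ sees $A$'s output at $u_1$ but not at $u_5$, $u_4$ sees it at $u_5$ but not at $u_1$, and $u_3$ sees neither and cannot even tell that $u_1,u_5$ are not part of its run. Any ``canonical sweep anchored at the nearest visible decisions'' is therefore computed from different data by $u_2$, $u_3$ and $u_4$, and nothing guarantees that two adjacent recoloured vertices output different colours; nor can $u_2$ predict $u_3$'s output, since $B_{\rho^\ast}(u_3)\not\subseteq B_{\rho^\ast}(u_2)$. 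So the consistency you defer to an ``edge-by-edge verification'' is not a technicality: the rule as stated is not a local algorithm at radius $\rho^\ast$.

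The paper closes precisely this hole by choosing a rule that depends \emph{only on the two direct neighbours}, for which visibility is guaranteed: at radius $\rho^\ast$, $v$ can simulate a direct neighbour up to radius $\rho^\ast-1$, hence can tell whether that neighbour stops strictly before the threshold (and if so, read off its colour) or is itself truncated. Adjacent truncated vertices are made automatically compatible not by a sweep but by a parity split of the four colours: a truncated vertex at even distance from $x$ outputs a colour in $\{1,2\}$ and at odd distance a colour in $\{3,4\}$ (the parity is computable because every truncated vertex sees the whole segment and hence its position in it), while the remaining freedom of two colours per class is used to avoid the one forbidden colour coming from an early-stopping direct neighbour. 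This is also where four colours are genuinely needed --- not, as in your sketch, to make a greedy path colouring go through (three would suffice for that), but to afford two disjoint pairs of colours for the two parity classes. To repair your proof you would essentially have to replace the anchored sweep by such a position-based pairing, i.e.\ reconstruct the paper's rule.
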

 
\begin{proof} For the sake of contradiction, suppose that there exists $G$, $x$, $y$, $k$ and $A$ as in the lemma, and some vertices between $x$ and $y$ with radiuses strictly larger than the threshold, $\max\{r(x),r(y)\}+k$. Let $N$ be the slice of identifiers that contains $x$, $y$, the vertices between them, and the views of $x$ and $y$. We show how to transform $A$ into a strictly smaller algorithm $A'$, by decreasing the radiuses of some vertices in the neighbourhood $N$. For every vertex $v$, if when running $A$, $v$ discovers that it is not in the neighborhood $N$ between $x$ and $y$, or if it stops before reaching the threshold, then it does the same with $A'$. Otherwise, $v$ knows that it is in this particular neighbourhood, stops at radius $\max\{r(x),r(y)\}+k$, and outputs obeying two simple rules that depend only on the two direct neighbours. Assume without loss of generality that $ID(x)>ID(y)$, and let $d$, be the distance between $v$ and $x$.
First, if a neighbour has stopped strictly before the threshold, choose a different colour. Second, if a neighbour has not stopped before the threshold, then if $d$ is even, $v$ outputs a colour in $\{1,2\}$, else $v$ outputs a colour in $\{3,4\}$. 
One can check that this new algorithm $A'$ produces a valid 4-colouring for every graph and identifiers. 
Moreover for every neighbourhood the radius given by $A'$ is at most the one given by $A$, and, for at least one neighbourhood, strictly smaller. This contradicts the fact that $A$ is minimal.
\end{proof}
 
We use lemma \ref{lem:threshold} to prove the following more practical result.  
 
%
 \begin{lemma}\label{lem:neighbourhood}
If a vertex $v$ uses radius $r$, in a minimal algorithm $A$ for 3-colouring, then the average of the radiuses of the vertices at distance at most $r/2$ from $v$, is $\Omega(r)$.
\end{lemma}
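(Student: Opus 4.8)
The plan is to derive Lemma~\ref{lem:neighbourhood} from Lemma~\ref{lem:threshold} by a pairing-and-counting argument: a single vertex $v$ of radius $r$ gets ``sandwiched'' between many symmetric pairs of vertices, and the threshold bound forces at least one vertex of each pair to inherit almost all of $v$'s radius. Label the ring around $v$ as $\dots, v_{-1}, v_0 = v, v_1, \dots$. For each $i\ge 1$ I would apply the threshold bound to the pair $x = v_{-i}$, $y = v_i$, which are separated by the $2i-1$ vertices $v_{-i+1},\dots,v_{i-1}$ of the arc through $v_0$; since $v = v_0$ is one of them, Lemma~\ref{lem:threshold} gives $r = r(v)\le \max\{r(v_{-i}),r(v_i)\} + (2i-1)$, i.e.\ $\max\{r(v_{-i}),r(v_i)\}\ge r - 2i + 1$. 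Hence for every $i$ with $1\le i\le \lfloor r/4\rfloor$, at least one of $v_{-i},v_i$ — call it $w_i$ — has radius strictly more than $r/2$.

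Next I would do the counting. The $w_i$ with $1\le i\le \lfloor r/4\rfloor$ are pairwise distinct, since $w_i\in\{v_{-i},v_i\}$ and these two-element sets are disjoint for distinct indices in this range — here one uses that the ring has more than $r/2$ nodes, which holds because in a minimal algorithm a node stops as soon as it sees the whole ring, so $r\le n/2$. Each $w_i$ sits at distance $i\le \lfloor r/4\rfloor\le r/2$ from $v$, so all $\lfloor r/4\rfloor$ of them lie among the vertices at distance at most $r/2$ from $v$, a set of at most $r+1$ vertices. Therefore the sum of the radiuses over that set is at least $\lfloor r/4\rfloor\cdot (r/2) = \Omega(r^2)$, and dividing by $r+1$ yields average radius $\Omega(r)$; the case of $r$ below an absolute constant is immediate, since $v$ itself already contributes $r$ to the numerator.

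The one genuine gap to close is that Lemma~\ref{lem:threshold} is proved for $4$-colouring, whereas $A$ is a minimal $3$-colouring algorithm. A proper $3$-colouring is a proper $4$-colouring, so the transformation underlying Lemma~\ref{lem:threshold} is available, but its parity rule really needs the fourth colour (a large-radius between-vertex forced to disagree with a low-radius between-vertex coloured $3$ has no safe ``odd'' colour, and recolouring can cascade). I would re-run the transformation in a $3$-colour-friendly form: every strictly-between vertex whose radius exceeds the threshold stops at the threshold and, from the part of its view it can already see — which includes $x$, $y$ and their views — jointly recomputes a proper $3$-colouring of the segment that agrees with the colours $A$ assigns to the endpoints and to the low-radius between-vertices, which is always possible by extending a partial proper $3$-colouring along a path. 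The delicate point, and the step I expect to take the most care, is the bookkeeping of which between-vertices can reconstruct which colours, and hence the verification that the resulting $A'$ is a valid $3$-colouring strictly improving on $A$; this may cost a constant factor in the threshold, but that merely rescales the admissible range $i\le\lfloor r/4\rfloor$ above and is harmless — indeed it is why the statement asks for the ball of radius $r/2$ rather than $r$. Once this variant of the threshold bound is established for minimal $3$-colouring algorithms, the pairing and counting are routine.
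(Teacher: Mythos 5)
Your core argument is, in substance, exactly the paper's proof. The paper applies Lemma~\ref{lem:threshold} to the same symmetric pairs $u_d, w_d$ at distance $d$ from $v$ (separated by the $2d-1$ vertices of the arc through $v$), obtains $r - 2d + 1 \leq r(u_d) + r(w_d)$, and sums over $d \leq \lfloor r/2 \rfloor$ to get a quadratic lower bound on the total radius inside the ball; your variant (for $d \leq \lfloor r/4\rfloor$ at least one vertex of each pair has radius above $r/2$, and these are $\Omega(r)$ distinct vertices among at most $r+1$) is a slightly cruder form of the same count and is correct. The housekeeping about distinctness of the pairs and the ring having enough vertices is harmless and the paper simply omits it.

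The last paragraph of your proposal, however, identifies the right subtlety but then commits to a cure that does not work as sketched and is not needed. A between-vertex that stops at the threshold $T=\max\{r(x),r(y)\}+k$ sees only its $T$-ball, so for another between-vertex at distance $\delta$ it can simulate $A$ only up to radius $T-\delta$; hence it cannot in general learn the output colours of the low-radius between-vertices other than its two direct neighbours, and your ``joint recomputation of a proper $3$-colouring agreeing with all low-radius between-vertices'' is not locally computable. Retreating to direct neighbours only reinstates the coordination problem you yourself point out: two adjacent threshold vertices, each constrained by a low-radius neighbour invisible to the other, cannot always be consistently $3$-coloured by a local rule --- this is precisely why the paper's construction needs a fourth colour and the even/odd split into $\{1,2\}$ and $\{3,4\}$. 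The paper's escape is much cheaper: it runs the entire argument (minimality, Lemma~\ref{lem:threshold}, this lemma, and the final permutation construction) for $4$-colouring, and notes that any $3$-colouring algorithm is in particular a $4$-colouring algorithm, dominated by a minimal one with pointwise no larger radiuses, so the $\Omega(\log^* n)$ average lower bound for $4$-colouring transfers verbatim to $3$-colouring. Replace your third paragraph by that one-line reduction and your proof coincides with the paper's.
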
 

\begin{proof} First note that it is sufficient to prove the result for 4-colouring, as it implies the result for 3-colouring. 
Let $d$ in $\{1,2,..., \lfloor{} r/2 \rfloor{} \}$, and let $u_d$ and $w_d$ be the two vertices at distance $d$ from $v$ on the cycle. 
Lemma \ref{lem:threshold} implies 
$$r \leq \max\{r(u_d),r(w_d)\} + 2d -1,$$
then 
$$r- 2d +1 \leq r(u_d)+ r(w_d).$$ 
Then by summing over $d$:
$$ \sum_{d=1}^{\lfloor{} r/2 \rfloor{}} (r -2d +1) 
\leq  \sum_{d=1}^{\lfloor r/2 \rfloor{}} r(u_d) +r(w_d).$$
If we add $r$ on both side of the equation, the right-hand term is the sum of the radiuses of the vertices at distance at most $r/2$ from $v$, and the left-hand term is quadratic in $r$. The lemma follows. 
\end{proof}

The end of the proof of theorem \ref{thm:col} uses the following corollary of Linial's lower bound as a black box. For every algorithm that 3-colours a cycle of length larger than $n/2$, there exists a permutation of the node identifiers such that at least one vertex needs a radius of $\frac{1}{2}.\log^*(n/2)$. Given a minimal algorithm $A$, we show that we can build a permutation $\pi$ of the identifiers that leads to an average running time in $\Omega(\log^*n)$. 

First, consider an $n$-cycle and a permutation of the identifiers, such that one vertex has radius at least $\frac{1}{2}.\log^*(n/2)$. We take the slice of identifiers that are in the ball of radius $\frac{1}{2}.\log^*(n/2)$ around this vertex, and put it at the beginning of $\pi$. Then, consider the rest of the identifiers, and repeat the operation: remove from the original set of identifier a $\frac{1}{2}.\log^*(n/2)$-ball around a vertex with large radius, and concatenate it to $\pi$. We do it until there are less than $n/2$ vertices remaining in the original set. Finally, we put the rest of the identifiers at the end of $\pi$ in an arbitrary order. 

The vertex at the centre of each slice has exactly the same $\frac{1}{2}\log^*(n/2)$-neighbourhood as when it was removed from the original set of identifiers, hence its radius in $\pi$, with the same algorithm $A$, is at least $\frac{1}{2}\log^*(n/2)$. Thanks to lemma \ref{lem:neighbourhood}, the average radius in each slice is $\Omega(\log^*(n/2))$, that is $\Omega(\log^*(n))$. Then the average radius in $\pi$ is $\Omega(\log^*(n))$. This concludes the proof of the theorem.

\section{Conclusion and further work}

This paper presents a new measure of the locality. It is close to the classic measure for some problems, and very different for others. It would be interesting to characterise the problems of the first and second types. Also, we only consider the cycle topology, and results for more general graphs are missing. Last, as the average we consider is over the nodes, but it would also be interesting to begin to study the expectancy of the running time on graphs where the permutation of the identifiers is taken uniformly at random, for both the classic and the new measure.

\section*{Acknowledgements}
I would like to thank Pierre Fraigniaud, Juho Hirvonen, Tuomo Lempi\"ainen and Jukka Suomela for helpful discussions.

\end{document}